\documentclass[11pt, a4paper]{article}
\usepackage[x11names]{xcolor}
\usepackage[a4paper, margin=1in]{geometry}
\usepackage[english]{babel}

\usepackage{authblk}
\usepackage{amsmath, amssymb, amsthm}
\usepackage{booktabs}
\usepackage{enumitem}
\usepackage{hyperref}

\hypersetup{
    colorlinks=true,
    linkcolor=blue!70!black,
    citecolor=blue!70!black,
    urlcolor=blue!70!black
}

\theoremstyle{plain}
\newtheorem{theorem}{Theorem}[section]

\newtheorem{claim}[theorem]{Claim}

\newtheorem{proposition}[theorem]{Proposition}

\theoremstyle{definition}

\theoremstyle{remark}

\title{\textbf{Building On Learning-Augmented Online Sorting}}

\author{Wyatt Olson}
\affil{Gianforte School of Computing, Montana State University\\
\texttt{wyatt.olson@montana.edu}}

\date{} 

\begin{document}

\maketitle

\begin{abstract}
\noindent Online sorting is a problem that has gained attention recently, and is a natural translation of sorting into an online setting. The problem at hand is to place elements of an online sequence immediately and irrevocably with the goal of producing an array that is as close to sorted as possible upon inserting all elements. Recent work by Bercea \textit{et al.} \cite{bercea2026online} introduced the incorporation of learned predictions into the online sorting domain. They show lower bounds for the consistency and robustness of any algorithm, including randomization, then provide a deterministic algorithm that is $O(\frac{n}{r})$-consistent and $O(r)$-robust, which are tight bounds. However, this algorithm requires prior knowledge of the optimal offline cost of the full input sequence before all elements have been revealed. We resolve this by providing an algorithm with matching bounds that does not require knowledge of the optimal cost, then extend these results to the more general problem of online metric TSP, in line with the prior work.
\end{abstract}

\section{Introduction}
\label{sec:intro}
A small but interesting line of work has recently arisen surrounding the problem of online sorting, which was first proposed by Aamand \textit{et al.} \cite{aamand2023online} in work exploring translation-only polygon packing in an online setting. The problem is as follows: we are given an online input sequence $X=x_1,x_2, \dots, x_n$ to be sorted into an array $A$ of length $n$ such that an element of the input sequence must be irrevocably placed into an empty cell of $A$ before the next element of the sequence appears. The goal is for $A$ to be as close to sorted as possible upon assigning all values of $X$ to cells in $A$. The cost of the resulting array $A$ provides a measure to quantify this closeness and is defined as $c = \sum_{i=1}^{n-1}|A[i]-A[i+1]|$. It is clear that the optimal cost is given when $A$ is sorted in the traditional sense, in which case the cost is simply the range of the elements of $X$ (\textit{i.e.} $c_{opt}=\mathrm{range}(X)$). 

\subsection{Related Work}
\label{subsec:related}

Subsequent work has defined optimal bounds for the cost in a setting entirely oblivious to the input sequence $X$. We say that an algorithm is $\rho$-competitive if the cost $c$ is at most $\rho\cdot c_{opt}$ over all instances of the problem. Abrahamsen \textit{et al.} \cite{abrahamsen2024online} provide an algorithm with competitive ratio of $O(\sqrt{n})$ without knowledge of the range of $X$, and show a matching lower bound for randomized algorithms, building on the same bound provided for deterministic algorithms by Aamand \textit{et al.} Of particular interest to us is the work by Bercea \textit{et al.} \cite{bercea2026online} introducing the integration of learned predictions into the online sorting problem. In this scenario, we receive a predicted index denoted $\hat{i}_t$, for each value $x_t$, where $\hat{I}=\hat{i}_1, \hat{i}_2, \dots, \hat{i}_n$ is a permutation of the indices of $A$. That is to say, each prediction in the sequence is unique, and assigning each $x_t$ to the corresponding predicted cell $A[\hat{i}_t]$ results in no two elements in $X$ being placed in the same cell of $A$. 

With the introduction of predictions to an algorithm, we utilize two distinct metrics to analyze the cost in place of competitiveness. We say that an algorithm is $\alpha$-consistent if for all instances of the problem we are guaranteed a cost at most $\alpha\cdot\hat{c}$, where $\hat{c}$ is the cost of an algorithm that simply follows the provided predictions. Following convention, we also say that an algorithm is $r$-robust if for all instances of the problem we are guaranteed $r$-competitiveness, regardless of predictions and predicted cost. Intuitively, robustness ensures bad predictions don't result in excessively poor cost, while consistency ensures that to some degree, we are able to reap the rewards of good predictions. For online sorting, Bercea \textit{et al.} \cite{bercea2026online} provide tight bounds for both metrics: $O(r)$-robustness and $O(\frac{n}{r})$-consistency for $1\leq r\leq n$ (the lower bound tightens this to $\sqrt{n}\leq r\leq n$), which includes randomized algorithms. Of interest to us, they provide a deterministic algorithm that meets these bounds for online sorting; however, it requires prior knowledge of the optimal offline cost to meet these bounds. Our contribution is an algorithm inspired by this work that meets the same bounds without knowledge of the optimal cost. 

Another contribution of Bercea \textit{et al.} is providing an asymptotically tight learning-augmented algorithm for a more generalized version of online sorting, which was presented in \cite{abrahamsen2024online} as the online TSP problem. The bounds for this general setting match those of sorting, and the algorithmic approaches are nearly identical from the perspective of algorithms utilizing predictions. One appropriate distinction to make in this problem is replacing our prior $O(\sqrt{n})$-competitive solution with the clever algorithm devised by Bertram in \cite{bertram2025online} that handles the complications involved with distances in higher-dimensional metric spaces to match the $O(\sqrt{n})$-competitive algorithm for online sorting. Again, the algorithm presented in \cite{bercea2026online} has the downside of requiring knowledge of the optimal cost of the input sequence. Similar to the prior work, we are able to treat the $O(\sqrt{n})$-competitive algorithms for the respective problems effectively as black boxes, so our approach for online sorting can be transferred neatly to this general space, matching the prior work's bounds without prior knowledge of the optimal cost. 

\subsection{Contributions}
\label{subsec:contributions}
In this work we provide a pair of algorithms that match the cost bounds of those given by Bercea \textit{et al.} \cite{bercea2026online} for both online sorting and online metric TSP without requiring knowledge of the optimal solution to ensure those bounds. We choose to separate these two cases despite the latter being a generalization of the former because online sorting provides a much more natural space to work in, allowing us to build intuition that eases the understanding of online metric TSP. 

\section{Online Sorting With Predictions}
\label{sec:algorithm}

Here we present an alternative algorithm to that given by \cite{bercea2026online} that does not require knowledge of the optimal cost to achieve both $O(\frac{n}{r})$-consistency and $O(r)$-robustness. Similar to this prior work, it utilizes a block structure and the deterministic $O(\sqrt{n})$-competitive algorithm presented in \cite{abrahamsen2024online} that is oblivious to predictions as a sub-routine; however, we run an independent instance of this algorithm within each block with a coarser block width, preventing the need for a second phase or monitoring the cost with respect to the optimal. 

\subsection{Defining the Algorithm}
Let $r$ be a parameter $\sqrt{n}\leq r\leq n$, following from prior work \cite{bercea2026online}. We break up the target array $A$ into $k = \frac{r^2}{n}$ contiguous blocks of size $m=(\frac{n}{r})^2$, where the $j$-\textit{th} block consists of the $((j-1)\cdot m+1)$-st cell to the $(j\cdot m)$-th cell. Following \cite{bercea2026online}, to simplify notation we assume without loss of generality that $m$ and $k$ are integers. Let $X=x_1, x_2, \dots, x_n$ be our input sequence to be sorted and $\hat{I}=\hat{i}_1, \hat{i}_2, \dots, \hat{i}_n$ be the sequence of predictions such that at time $t$, item $x_t$ would be inserted at $A[\hat{i}_t]$ if we were to simply follow the sequence of predictions. 

Our algorithm is simple when abstracting the online sorting algorithm from \cite{abrahamsen2024online}; let us refer to it as \textsc{Alg}$_0$. For each block, create a new instance of \textsc{Alg}$_0$ that will run on the sequence of items we assign to the given block. Then, for each input $x_t$, identify the block $B_j$ corresponding to its predicted index; that is, find $B_j$ such that $A[\hat{i}_t]\in B_j$. Hand off $x_t$ to the instance of \textsc{Alg}$_0$ corresponding to $B_j$ that will in turn place $x_t$ within $B_j$. We note here that we are guaranteed each block receives precisely the number of elements it needs because the predictions are a permutation of the indices of $A$. 

\subsection{Cost Analysis}
We now analyze the cost of the solution of this algorithm, aiming to show that it is $O(\frac{n}{r})$-consistent and $O(r)$-robust. We first reiterate a fact shown in \cite{bercea2026online}, referenced there as Claim 6. 

\begin{proposition}[Adapted from Bercea et al.~\cite{bercea2026online}, Claim 6]
\label{proposition:block-range-bound}
Let $A$ be an array partitioned into $k$ blocks $B_1, \dots, B_k$. 
Then the predicted cost of the sequence satisfies

\[
\hat{c} \;\ge\; \sum_{j=1}^k \mathrm{range}(B_j)
\qquad\text{and}\qquad
2\hat{c} \;\ge\;\sum_{j=1}^{k-1} \mathrm{range}(B_j\cup B_{j+1})
\]
\end{proposition}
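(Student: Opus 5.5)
Here $\mathrm{range}(B_j)$ means the range of the set of input values whose \emph{predicted} cell lies in $B_j$, i.e.\ the values that would occupy $B_j$ if we followed the predictions. Let $\hat A$ be the array obtained by placing each $x_t$ at $\hat i_t$; then $\hat c=\sum_{i=1}^{n-1}|\hat A[i]-\hat A[i+1]|$. The whole argument rests on one elementary fact: for any contiguous segment $\hat A[a..b]$, the path cost $\sum_{i=a}^{b-1}|\hat A[i]-\hat A[i+1]|$ is at least $\max-\min$ over the segment. This follows from the triangle inequality applied along the path between the positions of the maximum and the minimum, since the sub-path between them already has cost at least their difference.

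For the first inequality, I would split the edges of $\hat A$ into the internal edges of each block (pairs $i,i+1$ both in $B_j$) and the $k-1$ boundary edges between consecutive blocks. The internal edge sets of different blocks are disjoint, and the edges of block $B_j$ form a contiguous path through all of its cells. The fact above then gives internal cost at least $\mathrm{range}(B_j)$ for each $j$. Summing over $j$ and dropping the nonnegative boundary edges yields $\hat c\ge\sum_j \mathrm{range}(B_j)$.

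For the second inequality, $B_j\cup B_{j+1}$ is again a contiguous segment of $\hat A$, so its path cost is at least $\mathrm{range}(B_j\cup B_{j+1})$. Every edge of $\hat A$ lies inside at most two of the consecutive unions: an internal edge of $B_j$ lies in $B_{j-1}\cup B_j$ and $B_j\cup B_{j+1}$, and a boundary edge between $B_j$ and $B_{j+1}$ lies only in $B_j\cup B_{j+1}$. Summing the per-segment bounds over $j=1,\dots,k-1$ therefore counts each edge at most twice, giving $2\hat c\ge\sum_{j=1}^{k-1}\mathrm{range}(B_j\cup B_{j+1})$.

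None of these steps is a real obstacle. The only point needing care is fixing the interpretation of $\mathrm{range}(B_j)$ as the set of predicted values rather than what the algorithm actually places there. Under that reading the statement is purely about $\hat A$, and the double-counting of edges is the only bookkeeping involved.
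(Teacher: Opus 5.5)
Your proof is correct and is essentially the paper's approach made rigorous. The paper only cites Claim 6 of Bercea et al.\ together with the intuition that each block's cost is at least its sorted cost, i.e.\ its range; your disjoint internal edge sets give the first inequality, and each edge lying in at most two consecutive unions gives the factor $2$ in the second. Your concern about how to read $\mathrm{range}(B_j)$ does no harm but is unnecessary: the algorithm routes $x_t$ by the block containing $\hat{i}_t$, so the values it places in $B_j$ are exactly those predicted for $B_j$.
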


Intuitively, this states that the minimum cost of each block is realized when it is sorted, that cost being its range. So, the cost of the algorithm with predictions must have a cost of $\hat{c}$ that is at least the cost of the sum over all blocks of the range within each block. 

\begin{claim}
    \label{claim:consistent}
    Our proposed algorithm is $O(\frac{n}{r})$-consistent.
\end{claim}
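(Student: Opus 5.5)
The plan is to split the algorithm's total cost into \emph{intra-block} costs and \emph{inter-block} costs and to bound each by $O(\frac{n}{r})\cdot\hat{c}$ using Proposition~\ref{proposition:block-range-bound}. Write the final array $A$ as the concatenation of the blocks $B_1,\dots,B_k$. Here $B_j$ denotes both the cells and the set of elements that the predictions assign to block $j$. Because the predictions are a permutation, the set of elements our algorithm places in block $j$ is exactly the set predicted for block $j$, so $\mathrm{range}(B_j)$ is the same quantity under our algorithm and under the prediction-following algorithm. The cost of $A$ is then
\[
c \;=\; \sum_{j=1}^{k} c(B_j) \;+\; \sum_{j=1}^{k-1} \bigl|A[jm]-A[jm+1]\bigr|,
\]
where $c(B_j)$ is the cost of the subarray occupying block $j$.

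For the intra-block term, observe that within block $j$ we run an independent instance of \textsc{Alg}$_0$ on an array of length $m$, fed exactly the $m$ elements of $B_j$. The $O(\sqrt{m})$-competitiveness of \textsc{Alg}$_0$ from \cite{abrahamsen2024online} therefore gives $c(B_j)\le O(\sqrt{m})\cdot\mathrm{range}(B_j)$. One point must be checked here: that guarantee holds without knowing the range, and it applies to an instance of size $m$ whose optimum is $\mathrm{range}(B_j)$. Since $\sqrt{m}=\frac{n}{r}$, summing over $j$ and applying the first inequality of Proposition~\ref{proposition:block-range-bound} yields
\[
\sum_j c(B_j)\;\le\; O\!\left(\tfrac{n}{r}\right)\sum_j \mathrm{range}(B_j)\;\le\; O\!\left(\tfrac{n}{r}\right)\hat{c}.
\]

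For the inter-block term, each boundary edge joins an element of $B_j$ with an element of $B_{j+1}$. Its length is therefore at most $\mathrm{range}(B_j\cup B_{j+1})$. Summing over $j$ and applying the second inequality of Proposition~\ref{proposition:block-range-bound} bounds the total by $2\hat{c}$, which is $O(\frac{n}{r})\hat{c}$ because $r\le n$.

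Adding the two bounds gives $c\le O(\frac{n}{r})\hat{c}$, which is the claim. The only step needing real care is the intra-block one. There I must confirm that the black-box guarantee of \textsc{Alg}$_0$ is $O(\sqrt{m})$ in terms of the sub-instance size $m$ rather than $n$. I must also confirm that the block sizes make $\sqrt{m}=\frac{n}{r}$, so that the chosen coarser block width is what produces the consistency ratio. Everything else is direct bookkeeping with Proposition~\ref{proposition:block-range-bound}.
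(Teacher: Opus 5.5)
Your proposal is correct and follows the paper's proof essentially step for step. It uses the same split into intra-block costs, bounded via the $O(\sqrt{m})=O(\frac{n}{r})$ guarantee of \textsc{Alg}$_0$ and the first inequality of Proposition~\ref{proposition:block-range-bound}, and boundary costs, bounded by $\mathrm{range}(B_j\cup B_{j+1})$ and the second inequality to give $2\hat{c}$. Your explicit remark that $2\hat{c}=O(\frac{n}{r})\hat{c}$ because $r\le n$ spells out a step the paper leaves implicit.
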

\begin{proof}
    \label{proof:consistent}
    We incur costs in two distinct ways: costs within blocks and costs between the edge elements of consecutive blocks. 
    
    First, consider the cost within each block. Since the optimal cost for each block is $\mathrm{range}(B_j)$ and each block has $m$ elements, the cost of the block is at most $C\cdot\sqrt{m}\cdot\mathrm{range}(B_j)$ where $C$ is the constant hidden in the asymptotics from \textsc{Alg}$_0$. Then, the cumulative cost over all blocks is $\sum_{j=1}^k(C\cdot\sqrt{m}\cdot\mathrm{range}(B_j))=C\cdot\sqrt{m}\cdot\sum_{j=1}^k\mathrm{range}(B_j)$. By the first inequality in Proposition \ref{proposition:block-range-bound}, we see that this value is upper bounded by $C\cdot\sqrt{m}\cdot\hat{c}=C\cdot\frac{n}{r}\cdot\hat{c}$. 

    Next, consider the cost between consecutive blocks. For a single pair of blocks, this value is the difference between the right-most value of the left block and the left-most value of the right block. At worst, this value is the difference between the maximum and minimum value across both sets, or $\mathrm{range}(B_j\cup B_{j+1})$. Considering all $k-1$ consecutive pairs of blocks, we get a cost of at most $\sum_{j=1}^{k-1}\mathrm{range}(B_j\cup B_{j+1})$. By the second inequality in Proposition \ref{proposition:block-range-bound}, we know this value is in turn at most $2\hat{c}$. 

    Combining the two distinct types of costs, we see that this algorithm has a cost of at most
    \[
    C\cdot\frac{n}{r}\cdot\hat{c} + 2\hat{c}=O(\frac{n}{r}\cdot\hat{c})
    \]
    Thus, the algorithm is $O(\frac{n}{r})$-consistent. 
\end{proof}

\begin{claim}
    \label{claim:robust}
    Our proposed algorithm is $O(r)$-robust. 
\end{claim}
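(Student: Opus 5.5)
We bound the algorithm's cost against $c_{opt}=\mathrm{range}(X)$ with no reference to the predictions, splitting the cost into the same two parts as in the proof of Claim~\ref{claim:consistent}: cost inside blocks and cost across the $k-1$ boundaries between consecutive blocks.

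\textbf{Cost inside blocks.} Block $B_j$ receives exactly $m$ elements, which form a subset of $X$. Hence the optimal offline cost of the sub-instance handed to the $j$-th copy of \textsc{Alg}$_0$ is $\mathrm{range}(B_j)\le \mathrm{range}(X)=c_{opt}$. The guarantee of \textsc{Alg}$_0$ from \cite{abrahamsen2024online} holds without knowing the range, so the cost within $B_j$ is at most $C\sqrt{m}\,\mathrm{range}(B_j)\le C\cdot\frac{n}{r}\cdot c_{opt}$. Summing over the $k=\frac{r^2}{n}$ blocks gives
\[
\sum_{j=1}^k C\sqrt{m}\,\mathrm{range}(B_j)\;\le\; k\cdot C\cdot\frac{n}{r}\cdot c_{opt} \;=\; C\cdot\frac{r^2}{n}\cdot\frac{n}{r}\cdot c_{opt} \;=\; C\cdot r\cdot c_{opt}.
\]

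\textbf{Cost across boundaries.} Each boundary contributes $|A[jm]-A[jm+1]|\le \mathrm{range}(X)=c_{opt}$, and there are $k-1<\frac{r^2}{n}$ boundaries. Since $r\le n$, we have $\frac{r^2}{n}\le r$, so the total boundary cost is at most $r\cdot c_{opt}$.

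Adding the two parts, the total cost is at most $(C+1)\,r\,c_{opt}=O(r)\cdot c_{opt}$ for every instance, whatever the predictions. Hence the algorithm is $O(r)$-robust.

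The only step needing care is the first. We must use that \textsc{Alg}$_0$'s guarantee applies to each block's own sub-instance, with its own optimum $\mathrm{range}(B_j)$ and its own length $m$, even though the elements arrive interleaved with those of other blocks. This holds because each instance of \textsc{Alg}$_0$ sees only its own subsequence, fills exactly its $m$ cells, and the instances are independent. The rest is the arithmetic $k\sqrt{m}=r$ and $k\le r$.
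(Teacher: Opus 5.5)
Your proposal is correct and follows the paper's proof essentially verbatim. You split the cost into intra-block and boundary terms, bound each block by $C\sqrt{m}\,\mathrm{range}(B_j)\le C\frac{n}{r}c_{opt}$ and sum over $k$ blocks to get $C\,r\,c_{opt}$, then bound the $k-1$ boundary jumps by $\frac{r^2}{n}c_{opt}\le r\,c_{opt}$, for a total of $(C+1)\,r\,c_{opt}$; your remark that each instance of \textsc{Alg}$_0$ sees only its own subsequence of exactly $m$ items makes explicit a point the paper leaves implicit.
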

\begin{proof}
    \label{proof:robust}
    Again, consider the costs within each block first. By our use of \textsc{Alg}$_0$, we know the cost of any block is at most $C\cdot\sqrt{m}\cdot c_{opt}'$, where $c_{opt}'$ is the optimal cost for the given block. It is easy to see that $c_{opt}'\leq c_{opt}$. The cost across all $k$ blocks is then at most $k\cdot(C\cdot\sqrt{m}\cdot c_{opt})=C\cdot\frac{r^2}{n}\cdot\frac{n}{r}\cdot c_{opt}=C\cdot r\cdot c_{opt}$. 

    Considering the cost between pairs of consecutive blocks, we know that the maximum cost between two elements is $c_{opt}$. So, the cost for all $k-1$ consecutive pairs in adjacent blocks is at most $k\cdot c_{opt}=\frac{r^2}{n}\cdot c_{opt}$. Since $r\leq n$, we can see $\frac{r}{n}\leq 1$, and in turn $\frac{r^2}{n}\leq r$. Then the cost between consecutive blocks is at most $r\cdot c_{opt}$. 

    Once again, combining the two distinct types of costs, we see that this algorithm has a cost of at most
    \[
    C\cdot r\cdot c_{opt} + r\cdot c_{opt}=(C+1)\cdot r\cdot c_{opt}=O(r\cdot c_{opt})
    \]
    Thus, the algorithm is $O(r)$-robust. 
\end{proof}

\section{Online Metric TSP With Predictions}
\label{sec:tsp}

The online sorting problem is a special case of the online traveling salesman problem (online TSP) on the line metric \cite{abrahamsen2024online, bercea2026online}. In the general setting, the input sequence $X = x_1, \dots, x_n$ is drawn from an arbitrary metric space $(\mathcal{U}, d)$, and the cost of an array $A$ is defined as $c(A) = \sum_{i=1}^{n-1} d(A[i], A[i+1])$. 

Our algorithm from Section~\ref{sec:algorithm} extends directly to general metric spaces without knowing $c_{opt}$. We maintain an identical block partition $B_1, \dots, B_k$ of width $m = (\frac{n}{r})^2$ and routing rule based on $\hat{i}_t$, and simply replace \textsc{Alg}$_0$ inside each block with the deterministic $O(\sqrt{n})$-competitive algorithm for online metric TSP by Bertram~\cite{bertram2025online}, which we denote as \textsc{Alg}$_1$. Like \textsc{Alg}$_0$, \textsc{Alg}$_1$ does not require prior knowledge of the optimal cost and does not deal with predictions.

\subsection{Cost Analysis}
To adapt the analysis of Section~\ref{sec:algorithm} to a general metric $d$, let $\hat{A}$ denote the array formed by following the predictions directly, and for any block $B_j$, let $c(\hat{A}[B_j])$ and $c_{opt}(B_j)$ denote the predicted cost and the optimal TSP cost, respectively, restricted to the elements assigned to $B_j$. Finally, let $\mathrm{diam}(S) = \max_{x, y \in S} d(x, y)$ denote the diameter of a subset of elements $S$. We use the metric analog of Proposition~\ref{proposition:block-range-bound} from \cite{bercea2026online}.

\begin{proposition}[Adapted from Bercea et al.~\cite{bercea2026online}, Claim 17]
\label{proposition:tsp-bound}
Let $A$ be an array partitioned into $k$ blocks $B_1, \dots, B_k$. Then the predicted cost $\hat{c} = c(\hat{A})$ satisfies
\[
\hat{c} \;\ge\; \sum_{j=1}^k c(\hat{A}[B_j]) \;\ge\; \sum_{j=1}^k \mathrm{diam}(B_j),
\qquad\text{and}\qquad
2\hat{c} \;\ge\; \sum_{j=1}^{k-1} \mathrm{diam}(B_j \cup B_{j+1}).
\]
\end{proposition}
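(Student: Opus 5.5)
We interpret $B_j$ as the set of elements whose predicted index falls in the $j$-th block, so that $\hat{A}[B_j]$ is the contiguous sub-array of $\hat{A}$ occupying cells $(j-1)m+1$ through $jm$. The plan is to decompose the predicted tour $\hat{c}=\sum_{i=1}^{n-1} d(\hat{A}[i],\hat{A}[i+1])$ into edges lying inside a block and edges crossing a block boundary. Every term is nonnegative, so each inequality follows by discarding or regrouping terms and then applying the triangle inequality along paths.

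\textbf{First chain.} Edges inside different blocks are disjoint sets of terms of the sum. Discarding the $k-1$ boundary edges therefore gives
\[
\hat{c}\ge\sum_j c(\hat{A}[B_j]).
\]
For the second inequality, fix $j$ and let $x,y\in B_j$ be a pair attaining $\mathrm{diam}(B_j)$. The sub-array $\hat{A}[B_j]$ is a path that visits both $x$ and $y$. Its subpath between them has length at least $d(x,y)$ by repeated use of the triangle inequality, so $c(\hat{A}[B_j])\ge \mathrm{diam}(B_j)$. Summing over $j$ completes the first chain.

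\textbf{Second inequality.} For each $j\le k-1$, the cells of $B_j\cup B_{j+1}$ are contiguous in $\hat{A}$. They therefore form a single path of cost $P_j=c(\hat{A}[B_j\cup B_{j+1}])$. The same triangle-inequality argument as above gives $P_j\ge \mathrm{diam}(B_j\cup B_{j+1})$. It remains to show $\sum_{j=1}^{k-1}P_j\le 2\hat{c}$. This is a counting step: each edge $(i,i+1)$ of $\hat{A}$ belongs to at most two of the windows $B_j\cup B_{j+1}$. An edge internal to block $B_\ell$ lies in the windows for $j=\ell-1$ and $j=\ell$. A boundary edge between $B_\ell$ and $B_{\ell+1}$ lies only in the window for $j=\ell$. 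Hence $\sum_j P_j\le 2\hat{c}$, and chaining the inequalities gives the claim.

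The only step needing care is the double-counting bound on window multiplicities, and it is routine. The proof uses nothing beyond the triangle inequality and nonnegativity of $d$. In particular it makes no use of line structure, which is why it transfers verbatim from Proposition~\ref{proposition:block-range-bound}.
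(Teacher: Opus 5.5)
Your proof is correct. You split $\hat{c}$ into within-block subpaths plus the $k-1$ boundary edges, lower-bound each contiguous subpath by the diameter of the points it visits via the triangle inequality, and observe that each edge of $\hat{A}$ lies in at most two of the windows $B_j \cup B_{j+1}$ (equivalently, odd-indexed windows are pairwise disjoint, as are even-indexed ones); this establishes all three inequalities. The paper gives no proof of its own and imports the statement from Bercea et al.\ (Claim 17), so your argument serves as a self-contained justification, consistent with the intuition the paper states for the line case after Proposition~\ref{proposition:block-range-bound}.
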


\begin{claim}
\label{claim:tsp}
For any $\sqrt{n} \le r \le n$, the proposed algorithm for online metric TSP with predictions is $O(\frac{n}{r})$-consistent and $O(r)$-robust without knowledge of $c_{opt}$.
\end{claim}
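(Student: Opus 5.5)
The plan is to follow the two-part decomposition of Claims~\ref{claim:consistent} and~\ref{claim:robust}, replacing ranges by diameters and \textsc{Alg}$_0$ by \textsc{Alg}$_1$. The output array's cost splits into two parts. The \emph{intra-block} cost is $\sum_j c(A[B_j])$, where $A[B_j]$ is the subarray written by the $j$-th instance of \textsc{Alg}$_1$. The \emph{inter-block} cost is $\sum_{j=1}^{k-1} d(a_j, b_{j+1})$, where $a_j$ is the last element of $B_j$ and $b_{j+1}$ is the first element of $B_{j+1}$. Since predictions form a permutation, each block receives exactly $m$ elements, so each instance of \textsc{Alg}$_1$ runs on a valid instance of size $m$. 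Its guarantee therefore gives $c(A[B_j]) \le C\sqrt{m}\, c_{opt}(B_j) = C\frac{n}{r} c_{opt}(B_j)$. No step uses $c_{opt}$ algorithmically, which gives the ``without knowledge'' part.

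\textbf{Consistency.} The optimal TSP path over the elements of $B_j$ costs no more than the path that $\hat{A}$ induces on that block, so $c_{opt}(B_j)\le c(\hat{A}[B_j])$. By the first chain in Proposition~\ref{proposition:tsp-bound}, the intra-block cost is then at most $C\frac{n}{r}\sum_j c(\hat{A}[B_j]) \le C\frac{n}{r}\hat{c}$. This is the one place where the metric case differs from sorting. In the line case the block optimum was the range. Here the diameter can be much smaller than the block's TSP optimum, so we must compare with the predicted in-block cost $c(\hat{A}[B_j])$ rather than with $\mathrm{diam}(B_j)$. That is why the stronger middle term of Proposition~\ref{proposition:tsp-bound} is needed, and this is the step to state carefully. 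For the inter-block terms, $a_j$ and $b_{j+1}$ both lie in $B_j\cup B_{j+1}$. Hence $d(a_j,b_{j+1})\le \mathrm{diam}(B_j\cup B_{j+1})$, and the second inequality of Proposition~\ref{proposition:tsp-bound} bounds the total by $2\hat{c}$. Since $n/r\ge1$, the total is $O(\frac{n}{r}\hat{c})$.

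\textbf{Robustness.} We need two facts about the global optimal path $c_{opt}$ over all of $X$. First, a Hamiltonian path through a superset can be shortcut to one through a subset without increase, by the triangle inequality, so $c_{opt}(B_j)\le c_{opt}$. Second, any two points lie on the optimal path, so by the triangle inequality $d(x,y)\le c_{opt}$ for every pair. Summing, the intra-block cost is at most $k\cdot C\frac{n}{r}c_{opt} = C\,r\,c_{opt}$, using $k=r^2/n$. The inter-block cost is at most $(k-1)c_{opt}\le \frac{r^2}{n}c_{opt}\le r\,c_{opt}$, using $r\le n$. Together these give $(C+1)r\,c_{opt}$, i.e.\ $O(r)$-robustness.
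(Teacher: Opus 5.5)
Your proposal is correct and follows essentially the same argument as the paper: the same intra-/inter-block decomposition, the bound $c_{opt}(B_j)\le c(\hat{A}[B_j])$ combined with the middle term of Proposition~\ref{proposition:tsp-bound} for consistency, and shortcutting plus $d(x,y)\le c_{opt}$ for robustness. Your explicit remark that $n/r\ge 1$ lets the additive $2\hat{c}$ term be absorbed into $O(\frac{n}{r}\hat{c})$ is a small detail the paper leaves implicit.
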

\begin{proof}
Because the block sizes $m = (\frac{n}{r})^2$ and block count $k = \frac{r^2}{n}$ are identical to Section~\ref{sec:algorithm}, it suffices to show that the intra-block and inter-block metric costs obey the same upper bounds used in Claims~\ref{claim:consistent} and \ref{claim:robust}.

For consistency, running \textsc{Alg}$_1$ on block $B_j$ guarantees a cost within the block of at most $C' \cdot \sqrt{m} \cdot c_{opt}(B_j)$. Because $\hat{A}[B_j]$ is a valid ordering of the exact multiset of elements assigned to $B_j$, its cost is at least the optimal cost on those elements, so $c_{opt}(B_j) \le c(\hat{A}[B_j])$. Summing over all $k$ blocks and applying the first inequality of Proposition~\ref{proposition:tsp-bound} gives
\[
\sum_{j=1}^k C' \cdot \sqrt{m} \cdot c_{opt}(B_j) \;\le\; C' \cdot \sqrt{m} \sum_{j=1}^k c(\hat{A}[B_j]) \;\le\; C' \cdot \frac{n}{r} \cdot \hat{c}.
\]
Furthermore, the distance between the adjacent boundary elements of $B_j$ and $B_{j+1}$ is at most $\mathrm{diam}(B_j \cup B_{j+1})$, which sums to at most $2\hat{c}$ across all $k-1$ boundaries by the second inequality of Proposition~\ref{proposition:tsp-bound}. The remainder of the consistency bound follows identically to Claim~\ref{claim:consistent}.

For robustness, observe that by the triangle inequality, shortcutting the globally optimal tour past all elements not in $B_j$ cannot increase its cost, which implies $c_{opt}(B_j) \le c_{opt}$. Likewise, the distance between any pair of boundary elements in adjacent blocks is at most $\mathrm{diam}(X) \le c_{opt}$. Substituting these two bounds into the proof of Claim~\ref{claim:robust} immediately yields a total cost of at most $(C' + 1) \cdot r \cdot c_{opt} = O(r \cdot c_{opt})$.
\end{proof}
\section{Closing Remarks}
\label{sec:conclusion}
In this work we improve on prior algorithms that aim to leverage learned predictions in the setting of online sorting and its generalization to online metric TSP. Specifically, we provide an algorithm for each, inspired heavily by previous work, that relieves the constraint of knowing the optimal cost before all elements of the online input have been revealed. 

We leave open questions that reiterate those of prior work. In particular, we are interested in predictions that need not produce a permutation of the input. A seemingly significant question is whether we can achieve the same bounds when there are collisions in the predictions. Can we handle such collisions in an efficient way?

\bibliographystyle{plain}
\bibliography{references}

\end{document}